\documentclass[twocolumn,amsthm]{autart}
\usepackage{amsmath,amssymb,amsfonts,amscd}
\usepackage{graphicx}
\usepackage{tikz}
\usetikzlibrary{automata,positioning,arrows.meta,fit,backgrounds}
\usepackage{xspace}
\usepackage[all]{nowidow}
\usepackage[T1]{fontenc}
\usepackage[utf8]{inputenc}
\usepackage[sort,numbers]{natbib}

\theoremstyle{plain}
\newtheorem{theorem}{Theorem}
\newtheorem{proposition}[theorem]{Proposition}
\newtheorem{lemma}[theorem]{Lemma}

\theoremstyle{definition}
\newtheorem{definition}[theorem]{Definition}

\theoremstyle{remark}

\newcommand{\So}{\Sigma_{o}}
\newcommand{\Suo}{\Sigma_{uo}}
\newcommand{\Shi}{\Sigma_{hi}}
\newcommand{\To}{T_{o}}
\newcommand{\Tuo}{T_{uo}}

\DeclareMathOperator{\Fill}{Fill}
\newcommand{\PTIME}{\textsc{P}\xspace}
\newcommand{\PSPACE}{\textsc{PSpace}\xspace}

\journal{arxiv.org}

\edef\endfrontmatter{%
        \unexpanded\expandafter{\endfrontmatter}
        \noexpand\endNoHyper 
}

\begin{document}

\begin{frontmatter}

\title{Complexity of Local Observation Consistency\\ in Discrete-Event Systems}

\author{Tom{\' a}{\v s} Masopust}\ead{tomas.masopust@upol.cz}

\address{Faculty of Science, Palack\'y University Olomouc, Czechia}

\begin{abstract}
Hierarchical and multi-agent supervisory control under partial observation relies on \emph{local} consistency conditions between a plant and its abstraction: local observation consistency (LOC) for the projection abstractions of hierarchical control, and local relabeling observation consistency (LROC) for the relabeling abstractions of multi-agent systems. Both are companions to a global condition and both have been used without their verification complexity being settled. We show that both are \textsc{PSpace}-complete for nondeterministic plants and decidable in polynomial time for deterministic plants.
\end{abstract}

\begin{keyword}
Discrete-event systems \sep Supervisory control \sep Partial observation \sep Hierarchical control \sep Multi-agent systems
\end{keyword}

\end{frontmatter}

\section{Introduction}\label{sec:intro}
	Hierarchical supervisory control of discrete-event systems (DES) controls a plant through an abstraction of it, and the central question is which conditions on the abstraction make synthesis on the abstract level correct and maximally permissive. Under complete observation, the answer is well known and easy to verify~\cite{ZhongWonham90,WongWonham96,SchmidtBreindl11}. Under partial observation, it is not. Boutin \emph{et al.}~\cite{BKMSS11} introduced two conditions for the projection abstraction: \emph{observation consistency} (OC), a global condition relating the observations of the two levels, and \emph{local observation consistency} (LOC), a local companion that plays for the abstraction the role the observer property plays under complete observation.

	The same pattern appears in multi-agent DES~\cite{LCL19,LKL22}, where groups of isomorphic agents are collapsed onto a common template by a \emph{relabeling}. There the global condition is relabeling observation consistency (ROC) and the local companion is \emph{local relabeling observation consistency} (LROC)~\cite{LKL22}.

	Neither condition has its verification complexity settled. For LOC, we announced \textsc{PSpace}-completeness~\cite{KM20}; however, the claim was given without proof. The proof here is significantly improved and simplified compared with the one presented in the arxiv version of~\cite{KM20}. For LROC, Liu \emph{et al.}~\cite{LKL22} observe that it is ``similar to a variant of observability and thus can be checked in a similar way'', which is correct for deterministic plants but leaves the case of nondeterministic plants open.

	We show that both LOC verification and LROC verification are \PSPACE-complete for nondeterministic plants and decidable in polynomial time for deterministic plants. In particular, for deterministic plants, LOC is verifiable in polynomial time, which corrects an unproved claim in~\cite{KM20}.

\section{Preliminaries}\label{sec:prelim}
	We assume that the reader is familiar with the basics of supervisory control~\cite{CassandrasLafortune08} and complexity theory~\cite{AB09}; \PTIME{} and \PSPACE{} denote the classes of problems decidable in polynomial time and in polynomial space.

	An \emph{alphabet} is a finite nonempty set of events. For an alphabet $\Sigma$, the set of finite strings over $\Sigma$ is denoted by $\Sigma^*$, and the empty string is denoted by $\varepsilon$. We write $A\mathbin{\dot\cup}B$ for the union of \emph{disjoint} sets $A$ and $B$, and we identify a singleton set $\{q\}$ with the element $q$. A language $L$ over $\Sigma$ is a subset of $\Sigma^*$. The \emph{prefix closure} of $L$ is the set $\overline{L}=\{u \in \Sigma^* : uv\in L \text{ for some } v \in \Sigma^*\}$, and $L$ is \emph{prefix-closed} if $L=\overline L$.

	A \emph{nondeterministic finite automaton} (NFA) is a tuple $G=(Q_G,\Sigma,\delta,I,F)$, where $Q_G$ is a finite set of states, $I \subseteq Q_G$ is a set of initial states, $F\subseteq Q_G$ is the set of final states, and $\delta\colon Q_G\times\Sigma\to2^{Q_G}$ is the transition function that can be extended to $2^{Q_G}\times\Sigma^*$ in the usual way. Since every function is a relation, we also write $\delta\subseteq Q_G\times\Sigma\times Q_G$. The NFA is \emph{deterministic} (DFA) if $|\delta(q,a)|\le1$ for every $q\in Q_G$ and $a\in \Sigma$, and $I=\{q_0\}$ for a single \emph{initial state} $q_0$. The language of $G$ is the set $L(G)=\{w \in \Sigma^* : \delta(I,w)\cap F\neq\emptyset\}$. Every plant here is prefix-closed, and therefore we take $F=Q_G$, writing \emph{NFA with all states final}; in this case, we omit $F$ from the notation. The languages of such automata are the prefix-closed regular languages~\cite{KRS09}. Automata with $F\neq Q_G$ arise only inside proofs.

	For alphabets $\Sigma_1\subseteq\Sigma$, the \emph{(natural) projection} $p\colon\Sigma^*\to\Sigma_1^*$ is the morphism with $p(a)=a$ for $a\in\Sigma_1$ and $p(a)=\varepsilon$ otherwise; we write $s|_{\Sigma_1}$ for $p(s)$. For a morphism $f$ and a language $K$, the \emph{inverse image of $f$} is $f^{-1}(K)=\{s : f(s)\in K\}$. Under partial observation, the alphabet is partitioned as $\Sigma=\So\mathbin{\dot\cup}\Suo$ of observable and unobservable events, respectively, and $P\colon\Sigma^*\to\So^*$ is the \emph{observation}.

	Two abstractions are considered. In the hierarchical setting, a \emph{high-level alphabet} $\Shi\subseteq\Sigma$ is fixed and the abstraction is the projection $Q\colon\Sigma^*\to\Shi^*$. In the multi-agent setting, a \emph{relabeling} $R\colon\Sigma^*\to T^*$ onto a template alphabet $T$ disjoint from $\Sigma$ is fixed; it is a letter-to-letter morphism, that is, $R(s_1s_2)=R(s_1)R(s_2)$, $|R(s)|=|s|$, and it is in general not injective. In the sequel, the letters $Q$ and $R$ are reserved for these two abstractions, see Fig.~\ref{fig:maps}. For both, we assume that \emph{they preserve the observability status of events}; in particular, for $T=\To\mathbin{\dot\cup}\Tuo$, we have $R(\So)\subseteq\To$ and $R(\Suo)\subseteq\Tuo$.

  \begin{figure}
		\centering
		\[
		\begin{CD}
			\Sigma^* @>{Q}>> \Shi^* \\
			@V{P}VV \\
			\So^* 
		\end{CD}
		\qquad\qquad
		\begin{CD}
			\Sigma^* @>{R}>> T^* \\
			@V{P}VV \\
			\So^* 
		\end{CD}
		\]
		\caption{The abstractions used in the paper.}
		\label{fig:maps}
  \end{figure}

\begin{definition}[LOC~\cite{BKMSS11}]\label{def:loc}
	A prefix-closed language $L\subseteq\Sigma^*$ is \emph{locally observation consistent} (LOC) with respect to $Q$, $P$, and a set $\Sigma_c\subseteq\Sigma$ if for all $s,s'\in L$ with $P(s)=P(s')$ and every $e\in\Sigma_c\cap\Shi$ such that $Q(s)e\in Q(L)$ and $Q(s')e\in Q(L)$, there exist fillers $y,y'\in(\Sigma\setminus\Shi)^*$ such that $P(y)=P(y')$, $sye\in L$, and $s'y'e\in L$.
\end{definition}

\begin{definition}[LROC~{\cite{LKL22}}]\label{def:lroc}
	A relabeling $R$ is \emph{locally relabeling observation consistent} (LROC) with respect to a prefix-closed language $L\subseteq\Sigma^*$ and the observation $P$ if for all $s,s'\in L$ with $P(s)=P(s')$ and all unobservable events $b,b'\in\Suo$ with $R(b)=R(b')$,
	\begin{equation}\label{eq:lroc}
		sb\in L\ \wedge\ s'b'\in L\ \implies\ s'b\in L .
	\end{equation}
\end{definition}

	These conditions play the same role in their respective frameworks, and both are companions to a global condition that guarantees maximal permissiveness. However, their premises are of different kinds---LOC asks that the event be enabled in the \emph{abstraction}, LROC that it be enabled in the \emph{plant} after one of the two strings---and we treat them separately.

\section{The Local Condition of Hierarchical Control}\label{sec:loc}
	Let $G=(Q_G,\Sigma,\delta,I)$ be an NFA with all states final, $\Sigma_c \subseteq \Sigma$, and $e\in\Sigma_c\cap\Shi$. For a set $X\subseteq Q_G$, write
	\[
	  \Fill(X,e) = \bigl\{ P(y) : y\in(\Sigma\setminus\Shi)^*,\ \delta(X,ye)\neq\emptyset\bigr\}
	\]
	for the set of observations of the fillers that lead from $X$ to an occurrence of $e$. It is the image under $P$ of a language recognized by $G$ restricted to $(\Sigma\setminus\Shi)$-transitions, started in $X$, with the states carrying an outgoing $e$-transition made final. Thus, $\Fill(X,e)$ is regular and an NFA for it of size polynomial in the size of $G$ is immediate from $G$. Write $L=L(G)$. Since $sye\in L$ if and only if $\delta(\delta(I,s),ye)\neq\emptyset$, the witnesses that LOC asks for are available if two such sets meet:
	\begin{equation}\label{eq:loc-int}
		\begin{aligned}
			&\exists\,y,y'\colon P(y)=P(y')\wedge sye\in L\wedge s'y'e\in L\\
			&\qquad\iff \Fill(\delta(I,s),e)\cap\Fill(\delta(I,s'),e)\neq\emptyset \,.
		\end{aligned}
	\end{equation}
	We call a triple $(s,s',e)$ with $s,s'\in L$ and $e\in\Sigma_c\cap\Shi$ \emph{admissible} if $P(s)=P(s')$ and $Q(s)e,Q(s')e\in Q(L)$, that is, if it satisfies the premise of LOC. Simplifying the notation by writing $\Fill(s,e)=\Fill(\delta(I,s),e)$, \eqref{eq:loc-int} states that the language $L$ is LOC if and only if $\Fill(s,e)\cap\Fill(s',e)\neq\emptyset$ for every admissible triple $(s,s',e)$.

\begin{lemma}\label{lem:loc-split}
	A prefix-closed language $L=L(G)$ is LOC with respect to $Q$, $P$, and $\Sigma_c$ if and only if both of the following conditions hold for every $e\in\Sigma_c\cap\Shi$:
	\begin{description}
		\item[(C1)] for all $s,z\in L$ with $Q(s)=Q(z)$, if $\Fill(z,e)$ is nonempty, then so is $\Fill(s,e)$;
		\item[(C2)] for all $s,s'\in L$ with $P(s)=P(s')$, if $\Fill(s,e)$ and $\Fill(s',e)$ are both nonempty, then they intersect.
	\end{description}
\end{lemma}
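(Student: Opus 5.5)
The plan is to reduce everything to the characterization stated just before the lemma: $L$ is LOC if and only if $\Fill(s,e)\cap\Fill(s',e)\neq\emptyset$ for every admissible triple $(s,s',e)$. The one auxiliary fact needed is a description of the abstract premise $Q(s)e\in Q(L)$ in terms of fillers. For $s\in L$ and $e\in\Shi$, I claim that $Q(s)e\in Q(L)$ holds if and only if there is $z\in L$ with $Q(z)=Q(s)$ and $\Fill(z,e)\neq\emptyset$.

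For the claim's ``if'' direction, suppose $y\in(\Sigma\setminus\Shi)^*$ and $zye\in L$. Then $Q(zye)=Q(z)e=Q(s)e$, because $Q(y)=\varepsilon$ and $e\in\Shi$. For the ``only if'' direction, pick $w\in L$ with $Q(w)=Q(s)e$. Take the last occurrence of $e$ in $w$ and write $w=xev$. Since $Q(w)$ ends with $e$, we get $Q(v)=\varepsilon$ and $Q(x)=Q(s)$. By prefix-closedness $xe\in L$, so $\varepsilon\in\Fill(x,e)$, and $z=x$ works. A special case is worth recording: if $\Fill(s,e)\neq\emptyset$, then $Q(s)e\in Q(L)$, witnessed by $z=s$. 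I expect this claim to be the main step. The only delicate point is splitting $w$ at the \emph{last} $e$ so that the remaining suffix is invisible to $Q$.

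Now assume $L$ is LOC and fix $e\in\Sigma_c\cap\Shi$. For (C1), let $s,z\in L$ with $Q(s)=Q(z)$ and $\Fill(z,e)\neq\emptyset$. By the claim, $Q(s)e\in Q(L)$, so the triple $(s,s,e)$ is admissible. LOC then gives $\Fill(s,e)\cap\Fill(s,e)\neq\emptyset$, and in particular $\Fill(s,e)\neq\emptyset$. For (C2), let $P(s)=P(s')$ with both $\Fill(s,e)$ and $\Fill(s',e)$ nonempty. By the special case of the claim, $Q(s)e$ and $Q(s')e$ both lie in $Q(L)$. Hence $(s,s',e)$ is admissible, and LOC yields that the two sets intersect.

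Conversely, assume (C1) and (C2) hold, and let $(s,s',e)$ be admissible. Since $Q(s)e\in Q(L)$, the claim gives some $z\in L$ with $Q(z)=Q(s)$ and $\Fill(z,e)\neq\emptyset$. Applying (C1) to $s$ and $z$ gives $\Fill(s,e)\neq\emptyset$, and the same argument gives $\Fill(s',e)\neq\emptyset$. As $P(s)=P(s')$, (C2) now gives $\Fill(s,e)\cap\Fill(s',e)\neq\emptyset$. By \eqref{eq:loc-int}, this is exactly the LOC requirement for the triple, so $L$ is LOC.
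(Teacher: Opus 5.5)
Your proof is correct and follows the paper's argument essentially step for step. You use the same reformulation \eqref{eq:pre} of the premise $Q(s)e\in Q(L)$, the admissible triple $(s,s,e)$ for (C1), the witness $z=s$ for (C2), and the same converse via \eqref{eq:loc-int}. The only difference is cosmetic: you split the witness string at the last occurrence of $e$ and take $z=x$ with the empty filler, whereas the paper writes $u=zyey'$ and keeps a possibly nonempty filler $y$; both are valid.
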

\begin{proof}
	The proof is based on the following reformulation of the premise of LOC: for $s\in L$ and $e\in\Shi$,
	\begin{multline}\label{eq:pre}
			Q(s)e\in Q(L) \iff{} \Fill(z,e)\neq\emptyset
			~\text{ for some } z\in L \\ \text{ with } Q(z)=Q(s)\,.
	\end{multline}
	To see it, take $u\in L$ with $Q(u)=Q(s)e$. As $Q$ erases letters outside $\Shi$ and $e$ is the last letter of $Q(u)$, we can write $u=z\,y\,e\,y'$ with $y,y'\in(\Sigma\setminus\Shi)^*$ and $Q(z)=Q(s)$. Since $L$ is prefix-closed, $zye\in L$, and hence $P(y)\in\Fill(z,e)$. For the converse, $P(y)\in\Fill(z,e)$ means that $zye\in L$ and $Q(s)e=Q(z)e\in Q(L)$.

	We now prove the lemma. To this end, assume that $L$ is LOC, that is, $\Fill(s,e)\cap\Fill(s',e)\neq\emptyset$ for every admissible triple $(s,s',e)$. 
	To show (C1), let $Q(s)=Q(z)$ with $\Fill(z,e)\neq\emptyset$. By~\eqref{eq:pre}, $Q(s)e\in Q(L)$. Since the triple $(s,s,e)$ is admissible, we have $\Fill(s,e)\cap\Fill(s,e)\neq\emptyset$, and hence $\Fill(s,e)\neq\emptyset$ as required.
	To show (C2), let $P(s)=P(s')$ with both $\Fill(s,e)$ and $\Fill(s',e)$ nonempty. Taking $z=s$ in~\eqref{eq:pre} gives $Q(s)e\in Q(L)$, and taking $z=s'$ gives $Q(s')e\in Q(L)$, that is, $(s,s',e)$ is admissible and LOC yields the intersection.

	Conversely, assume (C1) and (C2), and let $s,s'\in L$ with $P(s)=P(s')$ and $Q(s)e,Q(s')e\in Q(L)$ be given. By~\eqref{eq:pre} applied to $s$, there is $z\in L$ with $Q(z)=Q(s)$ and $\Fill(z,e)\neq\emptyset$; thus, by (C1), $\Fill(s,e)\neq\emptyset$. The same argument applied to $s'$ gives $\Fill(s',e)\neq\emptyset$. Then, (C2) yields $\Fill(s,e)\cap\Fill(s',e)\neq\emptyset$, which by~\eqref{eq:loc-int} is the conclusion required by LOC. Hence $L$ is LOC.
\end{proof}

	Both conditions quantify over pairs of strings that agree on a sub-alphabet, and such pairs are tracked by the following product of $G$ with itself. For $\Xi\subseteq\Sigma$, let $G\otimes_\Xi G$ be the directed graph whose vertices are the pairs $(X,Y)$ of subsets of $Q_G$ and whose edges are
	\begin{align*}
		(X,Y) &\xrightarrow{~a~} (\delta(X,a),\delta(Y,a)) && \text{for } a\in\Xi,\\
		(X,Y) &\xrightarrow{~a~} (\delta(X,a),Y) && \text{for } a\notin\Xi,\\
		(X,Y) &\xrightarrow{~a~} (X,\delta(Y,a)) && \text{for } a\notin\Xi.
	\end{align*}
	A letter of $\Xi$ advances both components, and a letter outside $\Xi$ advances one of them. Let $\Pi_\Xi$ denote the set of vertices reachable from $(I,I)$ of sets of initial states whose two components are both nonempty.

\begin{lemma}\label{lem:prod}
	For every $\Xi\subseteq\Sigma$ and all $X,Y\subseteq Q_G$, the pair $(X,Y)$ belongs to $\Pi_\Xi$ if and only if there are $s,s'\in L$ with $s|_\Xi=s'|_\Xi$, $X=\delta(I,s)$, and $Y=\delta(I,s')$.
\end{lemma}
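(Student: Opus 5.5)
The plan is to prove both directions directly, by matching paths in $G\otimes_\Xi G$ from $(I,I)$ with interleavings of two strings. The fact used throughout is that $\delta(X,uv)=\delta(\delta(X,u),v)$ for the extended transition function on sets. Since all states of $G$ are final, a string $s$ belongs to $L$ if and only if $\delta(I,s)\neq\emptyset$.

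For the ``only if'' direction, fix a path from $(I,I)$ to $(X,Y)$ and argue by induction on its length. The invariant is that the current vertex equals $(\delta(I,s),\delta(I,s'))$ for strings $s,s'$ with $s|_\Xi=s'|_\Xi$; take $s=s'=\varepsilon$ at the start.
\begin{itemize}
\item An edge labelled $a\in\Xi$ appends $a$ to both strings.
\item An edge labelled $a\notin\Xi$ of the first kind appends $a$ to $s$ only, and one of the second kind appends $a$ to $s'$ only.
\end{itemize}
In each case the invariant is preserved: the extended transition function composes correctly, and appending a letter outside $\Xi$ does not change the projection onto $\Xi$. At the end, $(X,Y)\in\Pi_\Xi$ gives $\delta(I,s)=X\neq\emptyset$ and $\delta(I,s')=Y\neq\emptyset$, hence $s,s'\in L$.

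For the ``if'' direction, let $s,s'\in L$ with $s|_\Xi=s'|_\Xi=a_1\cdots a_k$. Factor the two strings as
\[
s=u_0a_1u_1\cdots a_ku_k \quad\text{and}\quad s'=u'_0a_1u'_1\cdots a_ku'_k,
\]
with all $u_i,u'_i\in(\Sigma\setminus\Xi)^*$. Then build the path block by block:
\begin{itemize}
\item read $u_0$ letter by letter with edges of the first non-$\Xi$ kind, advancing only the first component;
\item read $u'_0$ with edges of the second kind, advancing only the second component;
\item take the joint edge labelled $a_1$;
\item repeat with $u_1,u'_1,a_2,\dots$ until $u_k,u'_k$ are read.
\end{itemize}
By the same composition property, this path ends at $(\delta(I,s),\delta(I,s'))=(X,Y)$. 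Both components are nonempty because $s,s'\in L$, so $(X,Y)\in\Pi_\Xi$.

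There is no serious obstacle. The only point needing care is the role of nonemptiness: intermediate vertices may have empty components a priori, but this does no harm. Since $\delta(\emptyset,a)=\emptyset$, a component that becomes empty stays empty. So every vertex on a path to a member of $\Pi_\Xi$ has both components nonempty, and the prefixes of $s,s'$ along the way lie in $L$, consistently with $L$ being prefix-closed. The proof must check this explicitly so that membership in $L$ follows exactly from nonemptiness at the endpoint.
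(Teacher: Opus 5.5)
Your proposal is correct and follows essentially the same route as the paper: in one direction you read off $s$ and $s'$ from the edges that advance each component (phrased as an induction on path length), and in the other you interleave the factorizations $u_j,u'_j$ around the common letters $a_1,\dots,a_k$ of $\Xi$. Your closing remark about intermediate vertices is harmless but not needed. Membership $s\in L$ already follows from $\delta(I,s)=X\neq\emptyset$ at the endpoint alone, because all states of $G$ are final.
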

\begin{proof}
	Let a path from $(I,I)$ to $(X,Y)$ be given. Each of its edges is labelled by a letter and advances the first component, the second, or both. Reading the letters of the edges that advance the first component gives a string $s$, and reading those that advance the second gives a string $s'$; that is, $(X,Y)=(\delta(I,s),\delta(I,s'))$. The edges labelled by $\Xi$ advance both components and contribute that letter to $s$ and $s'$ at the same position; therefore, $s|_\Xi=s'|_\Xi$. Finally, all states of $G$ are final, and hence $X\neq\emptyset$ is equivalent to $s\in L$ and $Y\neq\emptyset$ to $s'\in L$.

	Conversely, let $s,s'\in L$ satisfy $s|_\Xi = s'|_\Xi = a_1\cdots a_k$, and write $s=y_0a_1y_1\cdots a_ky_k$ and $s'=y_0'a_1y_1'\cdots a_ky_k'$ with $y_j,y_j'\in(\Sigma\setminus\Xi)^*$. Starting at $(I,I)$, take, for $j=0,\dots,k$, first the edges that read $y_j$ in the first component only, then those that read $y_j'$ in the second component only, and then, for $j<k$, the edge labelled $a_{j+1}$ that advances both. The vertex reached is $(\delta(I,s),\delta(I,s'))$, and its components are nonempty because $s,s'\in L$.
\end{proof}

	For $\Xi=\Shi$ and $\Xi=\So$, Lemma~\ref{lem:loc-split} reads as follows:
	\begin{align}
		\forall (X,Y)\in\Pi_{\Shi} \colon & \Fill(Y,e) \neq \emptyset \Rightarrow \Fill(X,e) \neq \emptyset\,, \label{eq:C1}\\
		\forall (X,Y)\in\Pi_{\So} \colon & \Fill(X,e) \neq \emptyset, \Fill(Y,e) \neq \emptyset \notag\\
		&\quad \Rightarrow \Fill(X,e) \cap \Fill(Y,e) \neq \emptyset \,. \label{eq:C2}
	\end{align}

	The LOC verification problem asks, given an NFA $G$ over $\Sigma$ and alphabets $\So,\Shi,\Sigma_c\subseteq\Sigma$, whether $L(G)$ is LOC with respect to $Q$, $P$, and $\Sigma_c$.

\begin{proposition}\label{prop:loc-mem}
	LOC verification is in \PSPACE.
\end{proposition}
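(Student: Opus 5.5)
We decide the complement of LOC in nondeterministic polynomial space; since $\mathrm{NPSPACE}=\PSPACE$ by Savitch's theorem and \PSPACE{} is closed under complement, this proves the claim. By Lemma~\ref{lem:loc-split} together with Lemma~\ref{lem:prod}, $L(G)$ fails to be LOC if and only if, for some $e\in\Sigma_c\cap\Shi$, either \eqref{eq:C1} fails for some $(X,Y)\in\Pi_{\Shi}$ or \eqref{eq:C2} fails for some $(X,Y)\in\Pi_{\So}$. The algorithm first guesses $e$ and which of the two conditions is violated. It then guesses a path in $G\otimes_\Xi G$ from $(I,I)$, with $\Xi=\Shi$ or $\Xi=\So$ accordingly. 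The path is guessed edge by edge, storing only the current vertex. Each vertex is a pair of subsets of $Q_G$, so it needs $2|Q_G|$ bits, and the successor under a chosen letter is computed in polynomial time. The algorithm rejects on this branch as soon as a component becomes empty, since such vertices never lead back into $\Pi_\Xi$. At a nondeterministically chosen moment it stops at the current pair $(X,Y)$, which then belongs to $\Pi_\Xi$, and checks the violation there.

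The check for (C1) is easy. It asks whether $\Fill(Y,e)\neq\emptyset$ and $\Fill(X,e)=\emptyset$. Nonemptiness of $\Fill(Z,e)$ is plain reachability in $G$ restricted to $(\Sigma\setminus\Shi)$-transitions, from $Z$ to a state with an outgoing $e$-transition. This is polynomial time.

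The check for (C2) is where the real work lies. It asks whether both sets are nonempty and $\Fill(X,e)\cap\Fill(Y,e)=\emptyset$. As noted before Lemma~\ref{lem:loc-split}, each $\Fill(Z,e)$ is recognized by an NFA of polynomial size. This NFA is $G$ restricted to $(\Sigma\setminus\Shi)$-transitions, with initial states $Z$, final states those with an outgoing $e$-transition, and unobservable letters replaced by $\varepsilon$. Emptiness of the intersection of two such languages is decided by reachability in their product automaton: after $\varepsilon$-closure, pair states move synchronously on observable letters and independently on $\varepsilon$-moves. This product has polynomially many states, so the check is polynomial time, or at worst nondeterministic logspace.

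I expect the main obstacle to be justifying that the stored information suffices. Correctness of the guessed path relies on Lemma~\ref{lem:prod}: vertices of $\Pi_\Xi$ correspond exactly to the pairs $(\delta(I,s),\delta(I,s'))$ with $s|_\Xi=s'|_\Xi$. The path need never be longer than $4^{|Q_G|}$, and a counter of polynomially many bits can bound it if one wants termination of every branch. Altogether the nondeterministic procedure uses polynomial space, so the complement of LOC verification lies in NPSPACE, and hence LOC verification lies in \PSPACE.
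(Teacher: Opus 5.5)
Your proposal is correct and follows the paper's proof essentially step for step. Both guess $e$ and the violated condition, guess a path in $G\otimes_\Xi G$ while storing only the current pair of subsets (with a counter bounded by $2^{2|Q_G|}$), test the violation at the reached pair via emptiness of the $\Fill$ automata and of their intersection, and conclude by Savitch's theorem; your explicit appeal to closure of \PSPACE{} under complement and your pruning of vertices with an empty component only make precise what the paper leaves implicit.
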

\begin{proof}
	By Lemmata~\ref{lem:loc-split} and~\ref{lem:prod}, $L(G)$ is not LOC if and only if there are $e\in\Sigma_c\cap\Shi$ and a pair $(X,Y)$ that violates~\eqref{eq:C1} or~\eqref{eq:C2}. We describe a nondeterministic procedure that searches for such a witness in polynomial space.
	The procedure first guesses $e$ and which of the conditions is violated. It then guesses a path of $G\otimes_\Xi G$, with $\Xi=\Shi$ or $\Xi=\So$ accordingly, one step at a time, storing only the current pair $(X,Y)$ and a step counter; it stops at a nondeterministically chosen moment and verifies the violation at the pair reached. A pair of subsets of $Q_G$ requires $2|Q_G|$ space and the counter, bounded by the number $2^{2|Q_G|}$ of pairs, occupies $2|Q_G|$ bits; hence the space is linear in the number of states of $G$. A path longer than the bound goes through a cycle and may be truncated, that is, the counter terminates the search without affecting the set of reachable pairs.

	It remains to verify a violation of the reached pair $(X,Y)$. For a set $X\subseteq Q_G$, an NFA for $\Fill(X,e)$ is obtained from $G$ by keeping only the transitions under $\Sigma\setminus\Shi$, taking $X$ as the set of initial states, declaring final those states that have an outgoing $e$-transition, and relabeling each transition under $c$ by $c$ if $c\in\So$ and by $\varepsilon$ otherwise; it has $|Q_G|$ states and is constructible from $X$ on the fly. Analogously for $Y$. Emptiness of the language of an NFA and emptiness of the intersection of the languages of two NFAs are decidable in nondeterministic logarithmic space in their sizes, by guessing an accepting path, respectively a pair of accepting paths of equal label. Thus, given $(X,Y)$, both tests run in space logarithmic in the size of $G$.

	The procedure runs in nondeterministic space $O(|Q_G|)$. The result therefore follows from Savitch's theorem, according to which nondeterministic polynomial space is equal to deterministic polynomial space; symbolically, $\textsc{NPSpace}=\PSPACE$.
\end{proof}

	The following lower bound is stated in~\cite[Thm.~6]{KM20} without proof. We provide a proof (significantly simplified compared with the proof in the appendix of the arxiv version of~\cite{KM20}) and further improve the result.

\begin{proposition}\label{prop:loc-hard}
	LOC verification is \PSPACE-hard for NFAs. It is \PSPACE-hard even if $\Sigma_c=\So=\Shi$.
\end{proposition}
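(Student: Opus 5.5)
We reduce from universality of NFAs: given an NFA $A=(Q_A,\Gamma,\delta_A,I_A,F_A)$ with arbitrary final states, decide whether $L(A)=\Gamma^*$. This problem is \PSPACE-complete. The construction already meets $\Sigma_c=\So=\Shi$, so it proves both claims of Proposition~\ref{prop:loc-hard}. First we make $A$ complete: we add a non-final sink state and send all missing transitions to it. This preserves $L(A)$ and makes every word $w\in\Gamma^*$ lead to a nonempty set $\delta_A(I_A,w)$.

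\textbf{Simplification under $\So=\Shi$.} Here $Q=P$. Every filler $y\in(\Sigma\setminus\Shi)^*$ is unobservable, so $P(y)=P(y')=\varepsilon$. Hence $\Fill(X,e)$ is either $\{\varepsilon\}$ or $\emptyset$. Condition (C2) of Lemma~\ref{lem:loc-split} therefore holds trivially, and by the lemma LOC is equivalent to (C1) alone. In words: for every $s,z\in L$ with $P(s)=P(z)$, if $z$ can reach an $e$-transition by unobservable events, then so can $s$.

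\textbf{Construction.} Let $\Sigma=\Gamma\cup\{e,u\}$, with $\So=\Shi=\Sigma_c=\Gamma\cup\{e\}$ and $\Suo=\{u\}$. The NFA $G$ has all states final and is the disjoint union of two parts.
\begin{itemize}
\item[(i)] A gadget: a state $r$ with a self-loop on every letter of $\Gamma$ and an $e$-transition to a dead state $d$.
\item[(ii)] The completed $A$. From every state $q\in F_A$ we add a $u$-transition to a fresh state $f$, and from $f$ an $e$-transition to $d$.
\end{itemize}
The initial states are $\{r\}\cup I_A$. The construction is clearly polynomial.

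\textbf{Correctness.} We show that $L(G)$ is LOC if and only if $L(A)=\Gamma^*$, by checking (C1) for the single controllable high-level event $e$.

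Any event $e'\in\Gamma$ is never enabled after a $u$-free filler in a way that can fail. More simply, we can verify (C1) directly for $e'\in\Gamma$. From the gadget, $\Gamma$ is always enabled. In the completed $A$, $\Gamma$ is enabled with the empty filler at every state of $Q_A$. At $f$ and $d$ no $\Gamma$-event occurs, but a string ending in $f$ has the form $wu$. Then $P(wu)\gamma=w\gamma$ is observed, yet no unobservable continuation from $f$ yields $\gamma$. This is a potential violation, so we remove it: we also add $\Gamma$-self-loops at $f$. Strings reaching $d$ contain $e$, and after them nothing is enabled in $P(L)$, so the premise of (C1) fails there.

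Now fix the event $e$ and consider any $s\in L$ without $e$.
\begin{itemize}
\item If $s$ runs in the gadget, it can take $e$ directly.
\item If $s$ reaches $f$, it can take $e$ directly.
\item If $s=w\in\Gamma^*$ runs in $A$, then $\delta(I,w)$ contains a state with an unobservable path to $e$ if and only if $\delta_A(I_A,w)\cap F_A\neq\emptyset$, that is, if and only if $w\in L(A)$.
\end{itemize}
The premise of (C1) is always met: the gadget string $z=w$ satisfies $P(z)=w$ and $\Fill(z,e)\neq\emptyset$. Hence (C1) holds if and only if every $w\in\Gamma^*$ is accepted by $A$.

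\textbf{Main obstacle.} The delicate part is exactly these side cases: strings ending in the auxiliary states $f$ and $d$, and the events of $\Gamma\subseteq\Sigma_c\cap\Shi$. For each of them we must check that (C1) either holds trivially or that its premise fails. The fix above, self-loops at $f$ and a dead state $d$ after $e$, is what we will verify case by case. An alternative is to make $\Gamma$ uncontrollable, but the statement demands $\Sigma_c=\So$, so we avoid that.
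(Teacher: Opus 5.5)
Your reduction fails because of the gadget: as constructed, $L(G)$ is LOC for \emph{every} input $A$. Since $r$ is an initial state with a self-loop on every letter of $\Gamma$, we have $r\in\delta(I,w)$ for every $w\in\Gamma^*$. Hence $we\in L$ and $\varepsilon\in\Fill(w,e)$, whether or not $w\in L(A)$.

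Your case split (``$s$ runs in the gadget'' versus ``$s=w$ runs in $A$'') reasons about runs. But LOC is a property of the language, and $\Fill(s,e)=\Fill(\delta(I,s),e)$ depends on the whole set $\delta(I,s)$. In particular, the premise witness $z=w$ you use for (C1) is the same string as $s=w$, so (C1) holds trivially. Concretely, with the $\Gamma$-loops at $f$ you get
\[
L=\Gamma^*\cup\Gamma^*e\cup L(A)u\Gamma^*\cup L(A)u\Gamma^*e .
\]
Every $s\in L$ with $P(s)\in\Gamma^*$ can be continued by every letter of $\Gamma\cup\{e\}$ with the empty filler, since $\delta(I,s)$ contains $r$ or equals $\{f\}$. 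Strings containing $e$ falsify the premise. So every instance is mapped to a yes-instance.

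The missing idea is that the string witnessing the premise $Q(s)e\in Q(L)$ must have the same observation as $s$ but be a \emph{different} string. It must be separated from $s$ by unobservable events, so that $\delta(I,s)$ does not contain the gadget state. The paper's construction is built around exactly this. The alternating branch $\overline{(\Sigma_A\Sigma_A')^*}$ supplies, for every observation, a string after which $e$ is available behind an unobservable primed filler. The $A$-branch starts with two unprimed letters, so any string with two consecutive unprimed letters is confined to the copy of $A$.

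Your construction can be repaired in the same spirit. Instead of making $r$ initial, add a fresh initial state $r_0$ with $r_0\xrightarrow{u}r$. Then the premise witness is $z=uw$, and $\delta(I,w)=\delta_A(I_A,w)$ for $w\in\Gamma^+$. One checks that $L(G)$ is then LOC if and only if $\Gamma^+\subseteq L(A)$. The case $\varepsilon\notin L(A)$ (i.e., $I_A\cap F_A=\emptyset$) must be handled separately, e.g., by mapping it to a fixed non-LOC instance.

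The other parts of your argument are sound: the reduction to (C1) alone when $\So=\Shi$, completing $A$, the dead state $d$, and the $\Gamma$-loops at $f$.
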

\begin{proof}
	We reduce from the universality of NFAs with all states final, which is \PSPACE-complete~\cite{KRS09}. Let $A$ be an NFA over $\Sigma_A$ with all states final. Then $L(A)$ is prefix-closed, and we may assume that $\varepsilon\in L(A)$; otherwise, $L(A)=\emptyset$ and the instance is trivial. Let $\Sigma_A'=\{a' : a\in\Sigma_A\}$ be a disjoint copy of $\Sigma_A$, and set
	\[
	  \Sigma = \Sigma_A\cup\Sigma_A' \quad \text{ and } \quad \Sigma_c = \So = \Shi = \Sigma_A,
	\]
	that is, $Q=P$ erases the primed letters and the alphabet $\Sigma\setminus\Shi=\Sigma_A'$ is entirely unobservable. From $A$, we construct an NFA $B$ over $\Sigma$, with all states final, as follows: $B$ has a fresh initial state $p_0$, a fresh state $p_1$, a copy of $A$, and two fresh states $r_0,r_1$; it has the transitions $(p_0,a,p_1)$ and $(p_1,a,i)$ for every $a\in\Sigma_A$ and every initial state $i$ of $A$, together with all transitions of the copy of $A$, and the transitions $(p_0,a,r_1)$, $(r_1,a',r_0)$, and $(r_0,a,r_1)$ for every $a\in\Sigma_A$. Thus
	\[
	  L = L(B) = \overline{\Sigma_A\Sigma_A L(A)} \;\cup\; \overline{(\Sigma_A\Sigma_A')^*}\,,
	\]
	the first branch goes through states $p_0,p_1$ and the copy of $A$, called the \emph{$A$-branch}, and the second goes through states $r_0,r_1$, called the \emph{alternating branch}; see Fig.~\ref{fig:B}. The automaton $B$ has $|A|+4$ states and is constructed from $A$ in time polynomial in the size of $A$.

\begin{figure}[t]
\centering
\begin{tikzpicture}[shorten >=1pt,
   every state/.style={minimum size=6.5mm, inner sep=1pt, thick},
   >={Stealth[length=2mm]}, initial text={}, font=\small]
 
  \node[state, initial left] (p0) at (0,0)      {$p_0$};
  \node[state]               (p1) at (2.1,1.05) {$p_1$};
  \node[state]               (i1) at (4.3,1.55) {$i_1$};
  \node[state]               (i2) at (4.3,0.55) {$i_2$};
  \node[font=\footnotesize] (dots) at (5.5,1.05) {$\cdots$};
  \node[state]               (r1) at (2.1,-1.05) {$r_1$};
  \node[state]               (r0) at (4.3,-1.05) {$r_0$};
 
  \path[->]
    (p0) edge node[above left=-3pt] {$\Sigma_A$}  (p1)
    (p0) edge node[below left=-3pt] {$\Sigma_A$}  (r1)
    (p1) edge node[above left=-3pt] {$\Sigma_A$}  (i1)
    (p1) edge node[below left=-3pt] {$\Sigma_A$}  (i2)
    (r1) edge[bend left=20] node[above]            {$\Sigma_A'$} (r0)
    (r0) edge[bend left=20] node[below]            {$\Sigma_A$}  (r1);
 
  \begin{scope}[on background layer]
    \node[draw, dashed, rounded corners, fit=(i1)(i2)(dots), inner sep=7pt] (Abox) {};
  \end{scope}
  \node[font=\small, anchor=south] at (Abox.north) {copy of $A$};
\end{tikzpicture}
\caption{The automaton $B$ of Proposition~\ref{prop:loc-hard}. The upper
part is the $A$-branch, the lower part is the alternating branch.}
\label{fig:B}
\end{figure}

	Since the alphabet $\Sigma_A'$ is unobservable, $P(u)=\varepsilon$ for every $u \in \Sigma_A'^*$, and LOC is equivalent to the condition
	\begin{multline}\label{eq:loc-single}
		  \forall s\in L\ \forall e\in\Sigma_c\cap\Shi\colon\\
		  Q(s)e\in Q(L) \implies \exists u\in(\Sigma\setminus\Shi)^*\colon sue\in L\,.
	\end{multline}
	Indeed, LOC instantiated at the triple $(s,s,e)$ gives~\eqref{eq:loc-single}.
	Conversely, suppose~\eqref{eq:loc-single} and let $s,s'\in L$ and $e\in \Sigma_c \cap \Shi$ be such that $P(s)=P(s')$ and $Q(s)e,Q(s')e\in Q(L)$. Applying~\eqref{eq:loc-single} to $s$ and $s'$ separately yields fillers $u$ and $u'$ with $sue\in L$ and $s'u'e\in L$. Since the filler alphabet $\Sigma \setminus \Shi$ is unobservable, the two fillers meet the requirement $P(u)=\varepsilon=P(u')$ of LOC.

	It remains to show that~\eqref{eq:loc-single} holds if{f} $L(A)=\Sigma_A^*$.

	Assume that $L(A)=\Sigma_A^*$, and let $s\in L$ and $e\in\Sigma_A$. If $s$ is generated along the alternating branch, then $B$ is in state $p_0$, $r_0$, or $r_1$ after $s$. In the first two cases, $se\in L$, and $u=\varepsilon$ is the required filler; in the third case, $sa'e\in L$ for every $a'\in\Sigma_A'$, and hence $u=a'$ is the required filler. If $s$ is generated along the $A$-branch, then $s=abt$ with $a,b\in\Sigma_A$ and $t\in L(A)$. Since  $L(A)=\Sigma_A^*$, we have $te\in L(A)$, and hence $se=abte\in L$ for $u=\varepsilon$. In all cases~\eqref{eq:loc-single} holds.

	Conversely, assume that $L(A)\neq\Sigma_A^*$. Since $\varepsilon \in L(A)$, there are $t\in L(A)$ and $e\in\Sigma_A$ with $te\notin L(A)$. For $a,b\in\Sigma_A$, the string $s=abt \in L$ is generated along the $A$-branch. However, no filler $u$ satisfies $sue\in L$. Indeed, since $s$ contains two consecutive unprimed letters, every string of $L$ extending $s$ is generated along the $A$-branch. After $s$, the automaton $B$ is in a state of the copy of $A$, from which no primed letter is available, and therefore $u \in \{\varepsilon\}$. However, $se=abte\in L$ would imply $te\in L(A)$, which is a contradiction. Since $Q(L)=\Sigma_A^*$, we have $Q(s)e\in Q(L)$ and~\eqref{eq:loc-single} does not hold.
\end{proof}

	Combining Propositions~\ref{prop:loc-mem} and~\ref{prop:loc-hard}, we have the following.
\begin{theorem}\label{thm:loc-complete}
	LOC verification for NFAs is \PSPACE-complete. \qed
\end{theorem}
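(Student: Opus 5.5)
The theorem follows directly from the two preceding propositions, so the plan is simply to combine them. For membership in \PSPACE, we invoke Proposition~\ref{prop:loc-mem}. That proposition reduces LOC, via Lemma~\ref{lem:loc-split} and Lemma~\ref{lem:prod}, to the absence of a reachable pair $(X,Y)$ in $G\otimes_{\Shi}G$ or $G\otimes_{\So}G$ violating~\eqref{eq:C1} or~\eqref{eq:C2}. It then searches for such a violation nondeterministically in space $O(|Q_G|)$ and concludes via Savitch's theorem. Because the input includes the alphabets $\So,\Shi,\Sigma_c$ and the NFA $G$, the space bound is polynomial in the input size, and the problem lies in \PSPACE.

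For hardness, we invoke Proposition~\ref{prop:loc-hard}. It gives a polynomial-time reduction from universality of NFAs with all states final, which is \PSPACE-complete by~\cite{KRS09}, to LOC verification. The reduction maps $A$ to the automaton $B$ with $|A|+4$ states together with $\Sigma_c=\So=\Shi=\Sigma_A$. One point to check is that the reduction is a many-one reduction in the proper direction: $L(A)=\Sigma_A^*$ holds if and only if $L(B)$ is LOC. This is exactly what the proof establishes through the equivalence with~\eqref{eq:loc-single}. Another point is the trivial case $L(A)=\emptyset$, which is excluded there. We handle it by mapping such $A$ to a fixed non-LOC instance. Alternatively, we note that an NFA with all states final and at least one initial state always accepts $\varepsilon$, and an NFA with no initial states is detected in polynomial time.

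Putting the two halves together, LOC verification for NFAs is both in \PSPACE{} and \PSPACE-hard, hence \PSPACE-complete. There is no real obstacle at this stage. The only thing to verify is that both bounds refer to the same problem formulation, namely the input $(G,\So,\Shi,\Sigma_c)$ with $G$ an NFA with all states final, and they do.
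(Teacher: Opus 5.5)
Your proposal is correct and matches the paper, which obtains the theorem simply by combining Proposition~\ref{prop:loc-mem} (membership via the nondeterministic search over $G\otimes_\Xi G$ and Savitch's theorem) with Proposition~\ref{prop:loc-hard} (the reduction from universality of NFAs with all states final). Your handling of the degenerate case $L(A)=\emptyset$, by mapping it to a fixed non-LOC instance, only makes precise the paper's remark that this instance is trivial.
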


	For DFAs, the situation differs.
\begin{theorem}\label{thm:loc-dfa}
	LOC verification for DFAs is decidable in polynomial time.
\end{theorem}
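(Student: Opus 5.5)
For a DFA $G=(Q_G,\Sigma,\delta,q_0)$ I would apply Lemmata~\ref{lem:loc-split} and~\ref{lem:prod} exactly as in the proof of Proposition~\ref{prop:loc-mem}. The difference is that the search over pairs of subsets now collapses to a search over pairs of states. Since $G$ is deterministic, $\delta(I,s)$ is a single state $\delta(q_0,s)$ whenever $s\in L$. Hence every vertex of $\Pi_\Xi$ reachable from $(q_0,q_0)$ with both components nonempty is a pair of states $(p,q)\in Q_G\times Q_G$. So $\Pi_{\Shi}$ and $\Pi_{\So}$ are contained in a graph with $|Q_G|^2$ vertices and $O(|Q_G|^2|\Sigma|)$ edges, and each can be computed by a graph search in polynomial time. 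By Lemma~\ref{lem:loc-split}, $L(G)$ is LOC if and only if \eqref{eq:C1} and \eqref{eq:C2} hold for every $e\in\Sigma_c\cap\Shi$, with $X,Y$ ranging over these singletons.

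Next, for each $e$, I would precompute the set $N_e=\{q\in Q_G:\Fill(q,e)\neq\emptyset\}$. This set consists of the states from which some state with an outgoing $e$-transition is reachable using only transitions under $\Sigma\setminus\Shi$, so backward reachability computes it in linear time. Condition \eqref{eq:C1} then reduces to checking that no reachable $(p,q)\in\Pi_{\Shi}$ has $q\in N_e$ and $p\notin N_e$.

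For \eqref{eq:C2} the main point, and the only place where some care is needed, is that $\Fill(q,e)$ is recognized by an automaton that is \emph{not} deterministic: it is the restriction of $G$ to $(\Sigma\setminus\Shi)$-transitions with unobservable letters replaced by $\varepsilon$. I expect this to be the apparent obstacle, and it disappears because \eqref{eq:C2} asks only for \emph{nonemptiness of an intersection}, never for inclusion or universality. Nonemptiness of $L(M_1)\cap L(M_2)$ for two $\varepsilon$-NFAs is decidable in polynomial time by reachability in their product. In this product, a common observable letter advances both components, and an $\varepsilon$-move advances one component.

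Concretely, I would build a single graph on $Q_G\times Q_G$ whose edges are $(p,q)\xrightarrow{}(\delta(p,a),\delta(q,a))$ for $a\in\So\setminus\Shi$, and one-sided moves for $a\in\Suo\setminus\Shi$. In this graph I mark the pairs in which both states have an outgoing $e$-transition. Then $\Fill(p,e)\cap\Fill(q,e)\neq\emptyset$ holds if and only if a marked pair is reachable from $(p,q)$, and one backward search computes all such $(p,q)$ at once.

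Finally, \eqref{eq:C2} holds if and only if every $(p,q)\in\Pi_{\So}$ with $p,q\in N_e$ is among those pairs. Repeating this for each $e\in\Sigma_c\cap\Shi$ gives a total running time polynomial in $|Q_G|$ and $|\Sigma|$, roughly $O(|\Sigma|^2|Q_G|^2)$. The correctness of the whole procedure rests entirely on Lemmata~\ref{lem:loc-split} and~\ref{lem:prod} together with \eqref{eq:loc-int}.
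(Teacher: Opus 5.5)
Your proposal is correct and follows the paper's proof. You collapse $\Pi_{\Shi}$ and $\Pi_{\So}$ to at most $|Q_G|^2$ pairs of states computed by reachability, precompute emptiness of $\Fill(q,e)$ and of $\Fill(q,e)\cap\Fill(q',e)$, and scan the pairs against (C1) and (C2). The only difference is that you obtain all intersecting pairs (the relation $\mathrm{Comp}_e$ of Section~\ref{ssec:comp}) in one backward search over a product graph on $Q_G\times Q_G$, instead of running $O(|Q_G|^2)$ separate emptiness tests; this is merely an efficiency refinement.
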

\begin{proof}
	Let $G=(Q_G,\Sigma,\delta,q_0)$ be a DFA with $n$ states. Since $G$ is deterministic, $\delta(q_0,s)$ is a single state for every $s\in L = L(G)$, and every vertex of $G\otimes_\Xi G$ reachable from $(q_0,q_0)$ is a pair of states. Thus, $\Pi_{\Shi}$ and $\Pi_{\So}$ have at most $n^2$ elements each; both are computed by reachability in a graph with $n^2$ vertices. By Lemma~\ref{lem:prod}, $(q,q'')\in\Pi_{\Shi}$ if and only if $q=\delta(q_0,s)$ and $q''=\delta(q_0,z)$ for some $s,z\in L$ with $Q(s)=Q(z)$, and $(q,q')\in\Pi_{\So}$ if and only if $q=\delta(q_0,s)$ and $q'=\delta(q_0,s')$ for some $s,s'\in L$ with $P(s)=P(s')$.

	As in the proof of Proposition~\ref{prop:loc-mem}, an NFA for $\Fill(q,e)$ is immediate from $G$, and emptiness of $\Fill(q,e)$ and of $\Fill(q,e)\cap\Fill(q',e)$ are decidable in nondeterministic logarithmic space (polynomial time); we precompute both for all $O(n^2)$ pairs of states and all $e\in\Sigma_c\cap\Shi$. Conditions (C1) and (C2) of Lemma~\ref{lem:loc-split} then read
	\begin{align*}
		\forall (q,q'')\in\Pi_{\Shi}\colon & \Fill(q'',e)\neq\emptyset\implies\Fill(q,e)\neq\emptyset \,,\\
		\forall (q,q')\in\Pi_{\So}\colon & \Fill(q,e)\neq\emptyset\wedge\Fill(q',e)\neq\emptyset\\
		&\qquad\implies\Fill(q,e)\cap\Fill(q',e)\neq\emptyset \,,
	\end{align*}
	and both are checked by scanning the two sets of pairs. The whole procedure is polynomial in $n$ and $|\Sigma|$.
\end{proof}

	Theorem~\ref{thm:loc-dfa} corrects a claim made in~\cite{KM20}, where it is asserted without proof that LOC verification is not easier for deterministic than for nondeterministic plants. Unless $\PTIME=\PSPACE$, that assertion is false.

\section{The Local Condition of Multi-Agent Control}\label{sec:lroc}
	LROC is a different property, which we discuss next.

\begin{proposition}\label{prop:lroc-ptime}
	LROC verification is in \PSPACE, and decidable in polynomial time for DFAs.
\end{proposition}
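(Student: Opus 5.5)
We reduce LROC to a reachability question on the product $G\otimes_{\So}G$ and reuse Lemma~\ref{lem:prod} with $\Xi=\So$. Since $\Pi_{\So}$ is defined for sets of states, the same argument works for NFAs in polynomial space and for DFAs in polynomial time.

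\emph{Reformulation.} Fix $s,s'\in L$ with $P(s)=P(s')$, and put $X=\delta(I,s)$ and $Y=\delta(I,s')$. For a letter $b$ we have $sb\in L$ iff $\delta(X,b)\neq\emptyset$, and $s'b\in L$ iff $\delta(Y,b)\neq\emptyset$. Condition~\eqref{eq:lroc} therefore depends on $s,s'$ only through the pair $(X,Y)$. By Lemma~\ref{lem:prod}, the pairs arising this way are exactly the elements of $\Pi_{\So}$. Hence $R$ is LROC with respect to $L(G)$ and $P$ if and only if
\begin{multline*}
  \forall (X,Y)\in\Pi_{\So}\ \forall b,b'\in\Suo,\ R(b)=R(b')\colon\\
  \delta(X,b)\neq\emptyset\wedge\delta(Y,b')\neq\emptyset\implies\delta(Y,b)\neq\emptyset .
\end{multline*}
The test at a given pair is purely local: it checks enabledness of single letters, with no filler languages involved. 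This makes it simpler than the LOC case.

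\emph{Membership in \PSPACE.} We follow the proof of Proposition~\ref{prop:loc-mem}, guessing a violation nondeterministically. First guess $b,b'\in\Suo$ with $R(b)=R(b')$; this equality is checked directly from the given relabeling. Then guess a path in $G\otimes_{\So}G$ from $(I,I)$ one edge at a time, storing only the current pair $(X,Y)$ and a counter bounded by $2^{2|Q_G|}$. The path must keep both components nonempty, since vertices of $\Pi_{\So}$ must have nonempty components. At a nondeterministically chosen moment, check $\delta(X,b)\neq\emptyset$, $\delta(Y,b')\neq\emptyset$, and $\delta(Y,b)=\emptyset$. All of this uses space linear in $|Q_G|$, so Savitch's theorem gives membership in \PSPACE.

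\emph{Polynomial time for DFAs.} As in Theorem~\ref{thm:loc-dfa}, when $G$ is deterministic every reachable vertex is a pair of single states. So $\Pi_{\So}$ has at most $n^2$ elements and is computed by graph search. For each pair $(q,q')$ and each of the at most $|\Sigma|^2$ pairs $b,b'$, the local test is a constant-time table lookup, which gives polynomial total time.

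The only step requiring care is the reformulation, namely that quantifying over strings can be replaced by quantifying over reachable pairs. This holds because the condition refers to $s,s'$ only through their state sets, and Lemma~\ref{lem:prod} gives the exact correspondence. Nonemptiness of the components corresponds to $s,s'\in L$.
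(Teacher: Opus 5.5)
Your proposal is correct and takes essentially the same approach as the paper. Like the paper, you express a violation of \eqref{eq:lroc} as a ``bad'' pair of state sets $(X,Y)\in\Pi_{\So}$ via Lemma~\ref{lem:prod}, search for it nondeterministically in linear space, apply Savitch's theorem, and note that for DFAs $\Pi_{\So}$ consists of at most $|Q_G|^2$ pairs of states computable by plain reachability. Your requirement that the path keep both components nonempty is harmless: an empty component stays empty under every edge of $G\otimes_{\So}G$, so this is equivalent to the paper's condition on the endpoint alone.
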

\begin{proof}
	Let $G=(Q_G,\Sigma,\delta,I)$ be an NFA with all states final. By~\eqref{eq:lroc}, LROC fails if and only if there are $s,s'\in L=L(G)$ with $P(s)=P(s')$ and $b,b'\in\Suo$ with $R(b)=R(b')$ such that $sb\in L$, $s'b'\in L$, and $s'b\notin L$. Since all states of $G$ are final, these three memberships are determined by the state sets $X=\delta(I,s)$ and $X'=\delta(I,s')$: they say $\delta(X,b)\neq\emptyset$, $\delta(X',b')\neq\emptyset$, and $\delta(X',b)=\emptyset$. We call a pair $(X,X')$ of subsets of $Q_G$ \emph{bad} if some $b,b'\in\Suo$ with $R(b)=R(b')$ satisfy these three conditions. Whether a given pair is bad is decided by inspecting at most $|\Suo|^2$ choices of $(b,b')$ and applying $\delta$, in time polynomial in the size of $G$.

	By Lemma~\ref{lem:prod} applied with $\Xi=\So$, the pairs $(X,X')$ that arise from strings $s,s'\in L$ with $P(s)=P(s')$ are exactly the elements of $\Pi_{\So}$. Hence LROC fails if and only if $\Pi_{\So}$ contains a bad pair. A nondeterministic procedure guesses a path in $G\otimes_{\So}G$ from $(I,I)$ one edge at a time, storing only the current pair and a step counter bounded by the number $2^{2|Q_G|}$ of pairs, and tests the pair reached for badness. A pair takes space $2|Q_G|$ and the counter $2|Q_G|$, and the procedure runs in nondeterministic space $O(|Q_G|)$; thus, LROC verification is in $\textsc{NPSpace}=\PSPACE$ by Savitch's theorem.

	If $G$ is a DFA, then $I=\{q_0\}$ and $\delta(q_0,s)$ is a single state for every $s\in L$, and every element of $\Pi_{\So}$ is a pair of states; $\Pi_{\So}$ has at most $|Q_G|^2$ elements and is computed by reachability in a graph with $|Q_G|^2$ vertices. Testing each of its elements for badness is polynomial, and hence the whole procedure is polynomial.
\end{proof}

	The polynomial bound makes precise the observation of~\cite{LKL22} that LROC ``can be checked in a similar way as observability''. It does not survive nondeterminism.

\begin{proposition}\label{prop:lroc-hard}
	LROC verification is \PSPACE-hard for NFAs, even for a relabeling that is injective on the observable events and merges one pair of unobservable events.
\end{proposition}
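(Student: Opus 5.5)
We reduce from the universality of NFAs with all states final, which is \PSPACE-complete~\cite{KRS09}, as in Proposition~\ref{prop:loc-hard}. The idea is to let the observation of a string $s'$ be a string $t\in\Sigma_A^*$. We arrange that, after $s'$, the plant always has $b'$ enabled, but has $b$ enabled only if $t\in L(A)$. A second branch, reached by a different unobservable prefix, makes $b$ enabled after some $s$ with $P(s)=t$ for every $t$. Then LROC fails if and only if some $t\notin L(A)$ exists.

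Let $A$ be an NFA over $\Sigma_A$ with all states final. Set $\So=\Sigma_A$ and $\Suo=\{c,c',b,b'\}$. Let $T$ consist of a copy of $\Sigma_A$ together with three unobservable template letters. Define $R$ to be injective on $\Sigma_A\cup\{c,c'\}$ and to satisfy $R(b)=R(b')$; this is exactly the restricted relabeling of the statement.

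Construct $B$ with all states final. It has a fresh initial state $p_0$, a fresh state $u$, a fresh state $z$, a copy of $A$, and a fresh sink $d$. Its transitions are:
\begin{itemize}
\item $(p_0,c,u)$, $(u,a,u)$ for every $a\in\Sigma_A$, and $(u,b,d)$; thus $u$ has $b$ enabled but not $b'$;
\item $(p_0,c',z)$, $(z,a,z)$ for every $a\in\Sigma_A$, and $(z,b',d)$; thus $z$ has $b'$ enabled but not $b$;
\item $(p_0,c',i)$ for every initial state $i$ of $A$, all transitions of the copy of $A$, and $(q,b,d)$ for every state $q$ of the copy of $A$.
\end{itemize}
The state $p_0$ has neither $b$ nor $b'$, and $d$ has no outgoing transitions. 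The automaton $B$ has $|A|+4$ states and is built in polynomial time.

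The key step is to list all state sets $X'=\delta(I,s')$ with $\delta(X',b')\neq\emptyset$ and $\delta(X',b)=\emptyset$.
\begin{itemize}
\item The set $\{p_0\}$ and subsets of $\{d\}$ fail because $b'$ is not enabled.
\item Sets reached after a prefix $c$ are $\{u\}$, and they fail because $b'$ is not enabled.
\item After $s'=c't$ with $t\in\Sigma_A^*$, the set is $\{z\}\cup\delta_A(I_A,t)$. This always enables $b'$. It disables $b$ if and only if $\delta_A(I_A,t)=\emptyset$, that is, if and only if $t\notin L(A)$.
\end{itemize}
Every other string of $L(B)$ leads to $\{d\}$.

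Hence a violation of~\eqref{eq:lroc} requires $s'=c't$ with $t\notin L(A)$. Conversely, for such a $t$, take $s=ct\in L(B)$. Then $P(s)=t=P(s')$, $sb\in L(B)$, $s'b'\in L(B)$, and $s'b\notin L(B)$. So $R$ is LROC with respect to $L(B)$ and $P$ if and only if $L(A)=\Sigma_A^*$.

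The main obstacle, and the point to check carefully, is the exhaustive case analysis of state sets. Nondeterminism must not create a spurious violation. In particular, the branch through $z$ and the copy of $A$ must be entered by the same unobservable letter $c'$, so that $z$ cannot be separated from the $A$-states. The $u$-branch must use a distinct letter $c$, so that its $b$-transition never masks $s'b\notin L$. The restriction on $R$ matters only in that $b,b'$ is the single merged pair; it also keeps $c$ and $c'$ apart, which is harmless since~\eqref{eq:lroc} only ranges over pairs with $R(b)=R(b')$.
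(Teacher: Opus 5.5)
There is a genuine gap. Your case analysis covers only the ordered instance of \eqref{eq:lroc} in which the quantified $b$ of Definition~\ref{def:lroc} is your event $b$ and the quantified $b'$ is your $b'$. The definition ranges over all $b,b'\in\Suo$ with $R(b)=R(b')$, so the swapped instance $sb'\in L\wedge s'b\in L\Rightarrow s'b'\in L$ must hold as well. In your automaton this instance fails for every $A$. Take $s=c'$ and $s'=c$, both with observation $\varepsilon$:
\begin{itemize}
\item $c'b'\in L(B)$ via $p_0\to z\to d$;
\item $cb\in L(B)$ via $p_0\to u\to d$;
\item $cb'\notin L(B)$, because $u$ has no $b'$-transition.
\end{itemize}
The same works with $s=c't$ and $s'=ct$ for any $t$. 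Hence $R$ is never LROC with respect to $L(B)$, and a universal $A$ is mapped to a negative instance, so the reduction is incorrect.

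The underlying principle is this. Apply \eqref{eq:lroc} in both orders to two strings with the same observation that each enable some event of a merged class. It forces them to enable exactly the same events of that class. Your reference string $ct$ enables $\{b\}$, while $c't$ enables $\{b'\}$ or $\{b,b'\}$, and these never agree.

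The repair is to let the reference branch enable the whole merged class, by adding the transition $(u,b',d)$. After this change:
\begin{itemize}
\item The only reachable sets that enable one merged event but not the other are $\{z\}\cup\delta_A(I_A,t)$ with $t\notin L(A)$; these enable $b'$ but not $b$.
\item The swapped order yields no violation, since every reachable set enabling $b$ now also enables $b'$.
\item Your witness $s=ct$, $s'=c't$ then works exactly as you wrote.
\end{itemize}

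With this fix your construction is essentially the paper's. There, $L=\overline{c\Delta^*d_1}\cup\overline{c\Delta^*d_2}\cup\overline{\Delta^*d_2}\cup\overline{L(A)d_1}$. The reference string $cw$ enables both $d_1$ and $d_2$. The string $w$ always enables $d_2$, and enables $d_1$ if and only if $w\in L(A)$. The paper enters this second branch with no unobservable prefix, where you use $c'$. It also checks both orders $(d_1,d_2)$ and $(d_2,d_1)$ explicitly, which is precisely the step your proposal omits.
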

\begin{proof}
	We reduce universality of NFAs with all states final over $\Delta=\{a,b\}$, which is \PSPACE-complete~\cite{KRS09}. Let $A$ be such an NFA, then $L(A)$ is prefix-closed. Let $\Sigma=\Delta\mathbin{\dot\cup}\{c,d_1,d_2\}$ 
	with $\So=\Delta$ and $\Suo=\{c,d_1,d_2\}$, and let $R$ be injective on $\Delta$, with $R(c)=\hat c$ and $R(d_1)=R(d_2)=\hat d$. Consider the language
	\[
		L \;=\; \overline{c\,\Delta^{*}d_1}\ \cup\ \overline{c\,\Delta^{*}d_2}\ \cup\ \overline{\Delta^{*}d_2}\ \cup\ \overline{L(A)\,d_1} ,
	\]
	which is prefix-closed and can obviously be generated by an NFA with all states final of size polynomial in the size of $A$.

	Since $R$ is injective on $\Sigma$ except for the pair $d_1,d_2$, the only instances of~\eqref{eq:lroc} with $b\neq b'$ are $(b,b')=(d_1,d_2)$ and $(b,b')=(d_2,d_1)$. Observations erase $c,d_1,d_2$, and $P(w)=P(cw)=w$ for $w\in\Delta^{*}$.

	For $(b,b')=(d_2,d_1)$, $s' d_1 \in L$ implies $s'\in c \Delta^* \cup L(A)$, and hence $s'd_2\in L$ for every $s'\in L$ whose observation is some $w\in\Delta^{*}$, since both $wd_2$ and $cwd_2$ belong to $L$, and therefore it never violates~\eqref{eq:lroc}. For $(b,b')=(d_1,d_2)$, $s' d_2 \in L$ restricts the $s'$ to be of the form $w$ or $cw$ for some $w\in \Delta^*$. If $s'=cw$, then $s'd_1=cwd_1\in L$ and there is no violation. If $s'=w$, then $s'd_2=wd_2\in L$, the string $s=cw$ satisfies $P(s)=P(s')$ and $sd_1=cwd_1\in L$, and $s'd_1=wd_1 \in L$ if and only if $w\in L(A)$. A violation therefore exists if and only if there is $w\in\Delta^{*} \setminus L(A)$, that is, if and only if $L(A)\neq\Delta^{*}$.
\end{proof}

\begin{theorem}\label{thm:lroc-complete}
	LROC verification is \PSPACE-complete for NFAs and decidable in polynomial time for DFAs. 
\end{theorem}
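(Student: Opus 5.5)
The theorem follows by combining Propositions~\ref{prop:lroc-ptime} and~\ref{prop:lroc-hard}, in the same way that Theorem~\ref{thm:loc-complete} combines Propositions~\ref{prop:loc-mem} and~\ref{prop:loc-hard}.

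For membership and the deterministic case, I would invoke Proposition~\ref{prop:lroc-ptime} directly. It places LROC verification for NFAs in \PSPACE{}: a nondeterministic search through $G\otimes_{\So}G$ looks for a bad pair in $\Pi_{\So}$, and Savitch's theorem removes the nondeterminism. The same proposition gives a polynomial-time procedure for DFAs, since then $\Pi_{\So}$ consists of at most $|Q_G|^2$ pairs of states and each pair is tested for badness in polynomial time.

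For hardness, I would invoke Proposition~\ref{prop:lroc-hard}, which reduces universality of NFAs with all states final over a binary alphabet to LROC verification. Two points need checking for the reduction to be valid.
\begin{itemize}
\item The plant for $L$ must be an NFA with all states final whose size is polynomial in $|A|$. The plant is the union of four small components: the first three are fixed small automata with a $\Delta$-loop, and the fourth is a copy of $A$ extended by $d_1$-transitions from every state to a fresh sink.
\item The property must be complemented. The reduction maps non-universal instances exactly to violations of LROC, so it is a reduction from non-universality. Since \PSPACE{} is closed under complement, hardness still follows.
\end{itemize}
Together with membership, this yields \PSPACE-completeness for NFAs.

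The only step that is not pure bookkeeping is this hardness step, and it is already settled by Proposition~\ref{prop:lroc-hard}. What remains is to note that the reduction runs in polynomial time and that complementation preserves \PSPACE-hardness.
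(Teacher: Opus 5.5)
Your proposal is correct and matches the paper's proof, which simply combines Proposition~\ref{prop:lroc-ptime} (membership in \PSPACE{} and the polynomial bound for DFAs) with Proposition~\ref{prop:lroc-hard} (hardness). One small remark: the detour through complementation is harmless but unnecessary, because the constructed instance satisfies LROC if and only if $L(A)=\Delta^*$, so the construction is already a polynomial-time many-one reduction from universality to LROC verification.
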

\begin{proof}
	Membership and the bound are shown in Proposition~\ref{prop:lroc-ptime}, and hardness is shown in Proposition~\ref{prop:lroc-hard}.
\end{proof}

\section{Algorithms}\label{sec:algo}
	Theorem~\ref{thm:loc-complete} rules out a polynomial-time algorithm for LOC on nondeterministic plants, but the search of Proposition~\ref{prop:loc-mem} is a reachability search over pairs of state sets and can be organized so that its exponential part is entered rarely and traversed lazily. This section describes two ingredients of such a procedure, in the order in which they should be applied: a polynomial test that is often conclusive (Section~\ref{ssec:locsuff}), and a precomputation that removes the language operations from the inner loop (Section~\ref{ssec:comp}).

\subsection{A Polynomial-Time Sufficient Condition}\label{ssec:locsuff}
	Proposition~\ref{prop:loc-mem} decides LOC by intersecting the languages $\Fill(X,e)$ over pairs of \emph{subsets} $X$ reachable with a common observation, and it is the subsets that cost exponential space. They can be dispensed with, at the price of a stronger requirement, because $\Fill$ distributes over unions in its first argument.

\begin{proposition}\label{prop:loc-suff}
	Let $G=(Q_G,\Sigma,\delta,I)$ be an NFA with all states final. If
	\[
		\Fill(q,e)\cap\Fill(q',e)\neq\emptyset
	\]
	for every  $e\in\Sigma_c\cap\Shi$ and every pair of states $(q,q')\in \bigcup_{(X,Y) \in \Pi_{\So}} X\times Y$, then $L(G)$ is LOC with respect to $Q$, $P$, and $\Sigma_c$.
\end{proposition}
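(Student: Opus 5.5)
The plan is to verify LOC through the characterization established just before Lemma~\ref{lem:loc-split}: $L$ is LOC if and only if $\Fill(s,e)\cap\Fill(s',e)\neq\emptyset$ for every admissible triple $(s,s',e)$. So we fix an admissible triple, with $s,s'\in L$, $P(s)=P(s')$, $e\in\Sigma_c\cap\Shi$, and $Q(s)e,Q(s')e\in Q(L)$. The task is to exhibit a common element of $\Fill(\delta(I,s),e)$ and $\Fill(\delta(I,s'),e)$.

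First, set $X=\delta(I,s)$ and $Y=\delta(I,s')$. Because $s,s'\in L$ and $P(s)=P(s')$, Lemma~\ref{lem:prod} with $\Xi=\So$ gives $(X,Y)\in\Pi_{\So}$. In particular, $X$ and $Y$ are both nonempty, since membership in $L$ means a nonempty state set when all states are final. Hence we can pick some $q\in X$ and $q'\in Y$, and then $(q,q')\in\bigcup_{(X,Y)\in\Pi_{\So}}X\times Y$. The hypothesis of the proposition now yields a string $w\in\Fill(q,e)\cap\Fill(q',e)$.

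Second, we use that $\Fill$ is monotone, and indeed distributes over unions, in its first argument. By the definition, $\Fill(X,e)=\bigcup_{q\in X}\Fill(q,e)$, because $\delta(X,ye)=\bigcup_{q\in X}\delta(q,ye)$ is nonempty exactly when some $\delta(q,ye)$ is nonempty. Therefore $w\in\Fill(q,e)\subseteq\Fill(X,e)$ and $w\in\Fill(q',e)\subseteq\Fill(Y,e)$, so the intersection is nonempty. By \eqref{eq:loc-int}, this gives fillers $y,y'\in(\Sigma\setminus\Shi)^*$ with $P(y)=P(y')=w$, $sye\in L$, and $s'y'e\in L$. This is exactly what LOC demands.

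There is no real obstacle. The only point to watch is that the hypothesis is quantified over every pair of states drawn from every reachable pair $(X,Y)$. This is stronger than needed: we use only one pair, and we never use admissibility beyond $P(s)=P(s')$. The argument must also never need $\Fill(X,e)$ to be nonempty independently, and it does not, because the hypothesis already forces $\Fill(q,e)\neq\emptyset$. The one step that needs care is the nonemptiness of $X$ and $Y$, which guarantees that a state pair $(q,q')$ exists. This comes from the definition of $\Pi_{\So}$ as consisting of pairs with both components nonempty.
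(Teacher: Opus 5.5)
Your proof is correct and follows the paper's argument: pick $q\in\delta(I,s)$ and $q'\in\delta(I,s')$, apply the hypothesis to $(q,q')$, and lift the common observation to $\Fill(s,e)\cap\Fill(s',e)$ via $\Fill(X,e)=\bigcup_{q\in X}\Fill(q,e)$ and~\eqref{eq:loc-int}. Your explicit appeal to Lemma~\ref{lem:prod} to place $(\delta(I,s),\delta(I,s'))$ in $\Pi_{\So}$ is a slightly more careful version of the paper's remark that $(q,q')$ is reachable in the synchronized product.
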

\begin{proof}
	$\Fill(X,e)=\bigcup_{q\in X}\Fill(q,e)$, since $\delta(X,ye)\neq\emptyset$ holds if and only if $\delta(q,ye)\neq\emptyset$ for some $q\in X$. Let $s,s'\in L$ with $P(s)=P(s')$ and let $e$ satisfy $Q(s)e,Q(s')e\in Q(L)$. The sets $\delta(I,s)$ and $\delta(I,s')$ are nonempty, and we may pick $q\in\delta(I,s)$ and $q'\in\delta(I,s')$; the pair $(q,q')$ is reachable in the synchronized product, since $P(s)=P(s')$. The hypothesis then gives $\Fill(q,e)\cap\Fill(q',e)\neq\emptyset$, and therefore $\Fill(s,e)\cap\Fill(s',e)\neq\emptyset$, and $L(G)$ is LOC by~\eqref{eq:loc-int}.

	The pairs of states in $\Pi_{\So}$ are explored by reachability in a graph with $|Q_G|^2$ vertices; each $\Fill(q,e)$ is recognized by an NFA of size polynomial in the size of $G$; and emptiness of the intersection of two NFAs is decidable in nondeterministic logarithmic space. The whole test is polynomial in the size of $G$.
\end{proof}

\subsection{Compiling the Compatibility Relation}\label{ssec:comp}
	Proposition~\ref{prop:loc-mem} searches over pairs of state sets and tests, at every node, whether two languages intersect. The test should not be repeated: it depends on the states involved, and can be compiled away. For $e\in\Sigma_c\cap\Shi$ define
	\[
		\mathrm{Comp}_e := \{(q,q')\in Q_G^2 : \Fill(q,e)\cap\Fill(q',e)\neq\emptyset\} ,
	\]
	computable in polynomial time, being $|Q_G|^2$ emptiness tests for intersections of two NFAs of size $O(|G|)$. Since $\Fill(X,e)=\bigcup_{q\in X}\Fill(q,e)$, the condition $\Fill(X,e)\cap\Fill(Y,e)\neq\emptyset$ says that $X\times Y$ meets $\mathrm{Comp}_e$, and at search time the test is a bitmask intersection and no automaton is built. Proposition~\ref{prop:loc-suff} is the special case in which $\mathrm{Comp}_e$ contains all reachable pairs of states, and the polynomial test costs nothing beyond the compilation and should be run first.

\section{Conclusion}\label{sec:concl}
	The two local conditions of hierarchical and multi-agent supervisory control are \PSPACE-complete to verify for nondeterministic plants, and both become polynomial in time as soon as the plant is deterministic. On the algorithmic side, the exponential search that decides the hierarchical condition should be preceded by a polynomial test that is often conclusive, and its inner language-intersection test should be compiled away rather than repeated.

\bibliographystyle{elsarticle-num}
\bibliography{loc}

\end{document}